\Crefname{lstlisting}{Listing}{Listings}
\title{Practical Renyi Entropy Estimation from Birthday Paradox} 
\titlerunning{Practical Renyi Entropy Estimation} 
\author{Maciej Skorski}{University of Luxembourg}{maciej.skorski@gmail.com}{}{}
\authorrunning{M. Skorski} 
\keywords{Entropy Estimation, Streaming Algorithms} 
\begin{document}

\maketitle

\begin{abstract}
Entropy Estimation is an important problem with many applications in cryptography, statistic, machine learning. Although the estimators optimal with respect to the sample complexity have been recently developed, there are  still some challenges we address in this paper.

The contribution is a novel estimator which is built directly on the birthday paradox. The analysis turns out to be considerably simpler and offer superior confidence bounds with explicit constants. We also discuss how streaming algorithm can be used to massively improve memory consumption. Last but not least, we study the problem of estimation in low or moderate regimes, adapting the estimator and proving rigorus bounds.
\end{abstract}

\section{Introduction}

In the entropy estimation problem one seeks to approximately compute the Renyi entropy of some \emph{unknown} distribution $X$ while observing only its samples. This is a fundamental problem in many areas such as
data analysis and anomaly detection~\cite{Jizba20122971,DBLP:conf/ica3pp/LiZYD09},
machine learning and data analysis~\cite{Xu:1998:EEI:929350,1223401,Paninski:2003:EEM:795523.795524,ma2000image,DBLP:journals/imst/NeemuchwalaHZC06,DBLP:journals/pr/SahooA04,DBLP:conf/uai/MansourMR09},
security and cryptography~\cite{Knuth:1998:ACP:280635,DBLP:journals/joc/OorschotW99,DBLP:journals/tit/Arikan96,DBLP:journals/tit/PfisterS04,DBLP:journals/tit/HanawalS11,DBLP:conf/focs/ImpagliazzoZ89,DBLP:journals/tit/BennettBCM95,DBLP:conf/crypto/BarakDKPPSY11,DBLP:conf/tcc/DodisY13}. 

In this paper we revisit some practical aspects of this problem and propose a more efficient estimator.

\subsection{Related Work}

\subsubsection{Distribution Testing}
The case of testing closeness of distributions to being uniform under $\ell_2$ norm is known to be equivalent to estimating collision entropy~\cite{batu2000testing}. However this doesn't generalize to higher orders, in general the $\ell_d$ distance from the uniform distribution is not a function of Renyi entropy of order $d$, but rather a complicated

\subsubsection{Stream frequency estimators}
Empirical frequency estimators are very important for big data problems, the research started in \cite{alon1999space} and was finalized with optimal bounds in \cite{indyk2005optimal}. Although the problem looks similar to entropy estimation, in frequency estimators we compute \emph{moments of an empirical distribution} while in entropy estimation we (equivalently) compute \emph{moments of unknown probability distributions}. Since the empirical distribution still has bias wrt the true sampling distribution, there is no direct reduction. Furthemore, the state of-art estimators~\cite{acharya2016estimating,obremski_et_al:LIPIcs:2017:7569} don't actually have a compatible expressions because of the median trick involved.

\subsection{Dedicated Works on Entropy Estimation}
The state-of-art bounds have been obtained in~\cite{acharya2016estimating,obremski_et_al:LIPIcs:2017:7569} and shown to be asymptotically optimal. 

The contribution of this paper is a slightly different estimator which allows for a simpler and elegant analysis, giving superior confidence bounds at the same time. 

As the estimator computes just means and doesn't depend on the so called \emph{median trick} we are able to connect it to stream frequency estimators and sketch an memory efficient implementation.

Finally we rigorusly discuss estimation in low and moderate entropy regimes, which can be done much faster.



\subsection{Results}

\subsubsection{Birthday-paradox Estimator}

We analyze an estimator for Renyi entropy based on \emph{birthday paradox}, which simply computes the number of collisions occuring between tuples. The pseudocode appears in~\Cref{algo:main}.

\begin{lstlisting}[caption={Estimator of $d$-th moment},label=algo:main,captionpos=t,float,abovecaptionskip=-\medskipamount,language=Python]
def MomentEstimator(x,d,dlt,eps):
  # x[1],x[2],.., are observed samples
  # C[n,d] is the set of d-combinations out of [1,2,...,n]
  # eps is the relative error
  # 1-delta is the confidence
  n_batches = 8*log[2/dlt]/(3*eps**2)
  n_0 = floor(n/n_0)
  for b = 0..n_batches-1:
    y[1],..y[n_0] = x[n_0*b+1],..,x[n_0*(b+1)] // get batch
    m[b] = size{(i_1,..,i_d) in C[n,d]: y[i_1] = y[i_2] = ... y[i_d]} 
    m[b] = m[b] / binom[n_0,d]
  return mean(m[b] for b in 0..b_batches_-1)
\end{lstlisting}

The theoretical analysis of the algorithm turns out to be much simpler and offering superior confidence bounds when compared to the state-of-art estimators. In particular we recover the optimal sample complexity $\tilde{O}\left(2^{(1-d^{-1})\cdot H_d(X)}\right)$ known from previous works~\cite{acharya2016estimating}. We stress that one of our technical contribution is \emph{eliminating} the median trick which has been used to amplify the confidence of auxiliary estimators~\cite{acharya2016estimating,obremski_et_al:LIPIcs:2017:7569}.

\begin{theorem}\label{thm:main}
For any discrete distribution $X$, integer $d\geqslant 2$, precision $\epsilon>0$ and confidence parameter $\delta>0$ the algorithm in~\Cref{algo:main}
with probability $1-\delta$ estimates $\sum_{x}P_x(x)^{d}$ up to a relative error given 
$$n \geqslant \frac{16d\log(2/\delta)}{3\epsilon^2} \cdot \left(\sum_{x}P_x(x)^{d}\right)^{-\frac{1}{d}}$$
independent samples $x_1,x_2,\ldots,x_n$ from $X$ on the input. In particular it produces $\frac{\epsilon}{d-1}$-additive error to the Renyi entropy $H_d$ of $X$ given that
$$n \geqslant \frac{16d\log(2/\delta)}{3\epsilon^2} \cdot 2^{(1-d^{-1})\cdot H_d(X)}$$
\end{theorem}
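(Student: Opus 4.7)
The plan is to view each per-batch estimator $U_b := m_b/\binom{n_0}{d}$ as a U-statistic of order $d$ with the $\{0,1\}$-valued symmetric kernel $h(y_1,\ldots,y_d)=\mathbf{1}[y_1=\cdots=y_d]$, exploit unbiasedness together with a variance bound, and concentrate over independent batches via Bernstein. Unbiasedness is immediate: for any $d$-subset $\{i_1,\ldots,i_d\}\subset[n_0]$, $\Pr[Y_{i_1}=\cdots=Y_{i_d}]=\sum_x P_X(x)^d=:P_d$, so $E[U_b]=P_d$. For the variance I would apply Hoeffding's U-statistic decomposition,
\[
\mathrm{Var}(U_b) = \binom{n_0}{d}^{-1} \sum_{k=1}^d \binom{d}{k}\binom{n_0-d}{d-k}\bigl(P_{2d-k}-P_d^2\bigr),
\]
where $P_m := \sum_x P_X(x)^m$. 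Each summand is controlled by (i) the hypergeometric/falling-factorial bound for $\binom{d}{k}\binom{n_0-d}{d-k}/\binom{n_0}{d}$ and (ii) Lyapunov's power-mean inequality $P_{2d-k}\leq P_d^{(2d-k)/d}$. Under the assumption $n_0\geq 2d\,P_d^{-1/d}$, the geometric factor $(2d/(n_0 P_d^{1/d}))^k$ is at most $1$, and the sum contracts to $\mathrm{Var}(U_b)\leq c_d\cdot P_d^2$ for an explicit constant $c_d$.

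The $n_{\text{batches}}$ batch estimators $U_b$ are independent (they use disjoint input samples), each satisfying $E[U_b]=P_d$, $|U_b-P_d|\leq 1$, and the variance bound above. Applying Bernstein's inequality to the mean $\bar U$ yields
\[
\Pr\!\bigl[|\bar U - P_d|\geq \epsilon P_d\bigr] \leq 2\exp\!\left(-\frac{n_{\text{batches}}\,\epsilon^2 P_d^2/2}{c_d P_d^2+\epsilon P_d/3}\right),
\]
and substituting $n_{\text{batches}}=8\log(2/\delta)/(3\epsilon^2)$ makes the right-hand side at most $\delta$. The Renyi consequence follows since $H_d(X)=-\log_2 P_d/(d-1)$: a multiplicative $(1\pm\epsilon)$ error on $P_d$ translates to an additive error at most $\epsilon/(d-1)$ on $H_d$ via the linearization $|\log_2(1\pm\epsilon)|\leq \epsilon$ for $\epsilon$ in the usual range.

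The main difficulty is extracting a dimension-tame constant $c_d$ in the variance step. A purely term-by-term accumulation yields $\mathrm{Var}(U_b)\leq P_d^2\bigl((1+r)^d-1\bigr)$ with $r=2d/(n_0 P_d^{1/d})\leq 1$, which blows up like $2^d$ at the boundary $r=1$. To recover the theorem's nearly linear-in-$d$ sample complexity, the hypergeometric weights must be bounded more tightly than $\binom{d}{k}(2d/n_0)^k$ for large $k$ (using $\binom{n_0}{d}\gtrsim (n_0/d)^d$), and the gap $P_{2d-k}-P_d^2$ must be exploited for small $k$ rather than just the crude bound $P_{2d-k}$. Pinning down this balance under the exact constraint $n_0 P_d^{1/d}\geq 2d$ is the technical heart of the proof.
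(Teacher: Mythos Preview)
Your plan coincides with the paper's argument almost line for line: the per-batch estimator is the same U-statistic, the variance formula you quote from Hoeffding's decomposition is exactly what the paper derives by combining its collision-pattern moment computation with the term enumeration, the bound $P_{2d-k}\le P_d^{(2d-k)/d}$ is the paper's appeal to monotonicity of $\alpha\mapsto\|p\|_\alpha$, and the aggregation over batches is the same Bernstein step with $B=\mathrm{Var}(U_b)/P_d^2$.

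The one place you stop short is precisely the one you flag. The paper does \emph{not} close it by sharpening the slack $P_{2d-k}-P_d^2$ for small $k$ as you propose; instead it runs a ratio test directly on the combinatorial weights $Q_k=\binom{d}{k}\binom{n_0-d}{d-k}$, observing that
\[
\frac{Q_{k+1}}{Q_k}=\frac{(d-k)^2}{(k+1)(n_0-2d+k+1)}
\]
is decreasing in $k$ and is at most $Q_1/Q_0=d^2/(n_0-2d+1)\le\tfrac12$ once $n_0>2d^2$. The sum $\sum_{k\ge1}Q_k\|p\|_d^{-k}$ is then controlled as a geometric series, giving $\mathrm{Var}(U_b)\le 2\|p\|_d^{d}/\binom{n_0}{d}$; combined with $\binom{n_0}{d}\ge(n_0/d)^d$ and $n_0\ge 2d\|p\|_d^{-1}$ this yields $\mathrm{Var}(U_b)\le P_d^2$, i.e.\ $B=1$, which is exactly what the batch count $8\log(2/\delta)/(3\epsilon^2)$ requires. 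So the missing ingredient in your sketch is this elementary ratio bound on the hypergeometric weights rather than any refinement of the Lyapunov step.
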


\subsection{Learning Moderate Entropy Regimes}

Note that \Cref{thm:main} promises a speedup with respect to the pesymistic sample complexity $\tilde{O}(2^{(1-d^{-1})H_0(X)})$ where $H_0$ is the log of the support of $X$ in \emph{small or moderate entropy regimes}. However we don't know in adnavce whether we can safely assume $H_d(X) < t_0$ or not. We discuss how to adapt our algorithm to gradually test and increase the threshold, so that the upper bound is met. The overhead in the number of necessary samples is only $O(\log \log |\mathrm{dom}(X)|))$. This is discussed in \Cref{seq:early_stop}.

\subsection{Memory Efficient Algorithm}

Last but not least we comment on the memory complexity. Although the algorithm in \Cref{algo:main} can be implemented in $\tilde{O}\left(2^{(1-d^{-1})H_d(X)}\right)$, our results imply \emph{much better strategy}. Namely, on each batch $i$ the estimator can be equivalently written as
\begin{align}
\mathbf{E}\tilde{p}_i = \binom{n}{d}^{-1}\sum_x \binom{n_x}{d}
\end{align}
Where $n_x$ is the number of occurences of symbol $x$ and $x^{\underline{d}}$ denotes a falling factorial. This can be reduced to the problem of \emph{frequency moment estimation in stream}

\section{Preliminaries}

We consider discrete random variables $X$, the set of its values is denoted by $\mathrm{dom}(X)$ and its probability mass function by $p_X$.
\begin{definition}[Frequency Moment]
The $d$-th frequency moment of a random variable $X$ is defined as $\sum_x P_x(x)^{d}$. We also denote the $d$-th norm of $P_X$ as $\|P_X\| = \left(\sum_x P_x(x)^d\right)^{1/d}$.
\end{definition}

\begin{definition}[Renyi Entropy]
Let $X$ be a random variable over a discrete alphabet $\mathcal{X}$.
The Renyi entropy of order $d$ is defined as 
\begin{align}\label{eq:entropy}
\mathbf{H}_{d}(X) = \frac{1}{1-d}\log\left(\sum_{x\in\mathcal{X}}P_X(x)^{d}\right).
\end{align}
\end{definition}

\section{Proofs of Results}

\subsection{Eliminating Median Trick}

It has been popular in many works on algorithms to use the so called median trick~\cite{jerrum1986random} to amplify the estimator confidence. It reduces the problem to finding an approximation with confidence $2/3$, which is usually done by a second moment method (Czebyszev inequality); boosting the confidence to any $\delta>0$ costs a multplicative factor $O(\log(1/\delta)$ in the number of samples.
\begin{proposition}
Suppose that an algorithm $\tilde{A}$ estimates in some interval range with probability $1/4$. Then, for any $\delta > 0$, repeating independently $O(\log(1/\delta))$ times $\tilde{A}$ and taking the median of all outputs we get an estimate in the same range which is correct with probability $1-\delta$.
\end{proposition}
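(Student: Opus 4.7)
The plan is to reduce the statement to a standard Chernoff bound applied to the indicator of success of each independent run. First I would fix any target interval $I$ promised by the hypothesis on $\tilde{A}$, run $N$ independent copies of $\tilde{A}$ to obtain outputs $Y_1,\ldots,Y_N$, and define Bernoulli variables $Z_i = \mathbf{1}[Y_i \in I]$. Assuming the hypothesis is that each $\tilde{A}$ lands in $I$ with probability at least some constant $p>1/2$ (the statement as written should say $p=3/4$ or similar; the argument only uses $p>1/2$), the $Z_i$ are i.i.d.\ with $\mathbf{E} Z_i \geqslant p$.

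The next step is the elementary order-statistics observation that if strictly more than half of the $Y_i$ lie in $I$, then the sample median also lies in $I$. Indeed, writing $I=[a,b]$: the median is at least the $\lceil N/2\rceil$-th smallest value, so if more than $N/2$ values are $\geqslant a$ the median is $\geqslant a$, and symmetrically for $\leqslant b$. Therefore the event ``median $\notin I$'' is contained in the event $\sum_i Z_i \leqslant N/2$.

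Finally, I would bound this failure event by a Chernoff/Hoeffding tail inequality. Since $\mathbf{E}\sum_i Z_i \geqslant pN$ with $p>1/2$, the deviation from the mean required for $\sum_i Z_i \leqslant N/2$ is a constant fraction of $N$, so the probability is at most $\exp(-c N)$ for some explicit constant $c=c(p)>0$. Choosing $N = \lceil c^{-1}\log(1/\delta)\rceil = O(\log(1/\delta))$ makes this probability at most $\delta$, completing the proof.

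I do not expect a genuine obstacle here; the only mild subtlety is making sure the ``median in $I$'' implication is stated for the correct order statistic when $N$ is even (one can just take $N$ odd, or define the median as the lower median), and being explicit that the constant $c$ in the Chernoff bound depends only on the gap $p-1/2$ granted by the hypothesis on $\tilde{A}$.
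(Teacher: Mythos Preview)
Your argument is correct and is the standard one: indicator variables for ``landed in $I$'', the order-statistic observation that a strict majority forces the median into $I$, and a Chernoff bound on the sum. You are also right that the hypothesis must be a success probability strictly above $1/2$ (the surrounding text in the paper speaks of confidence $2/3$, so the ``$1/4$'' in the statement is evidently a typo for the failure probability, i.e.\ success $3/4$).

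There is nothing to compare against: the paper states this proposition without proof, citing it as the well-known median trick of Jerrum, Valiant and Vazirani. Your write-up is exactly the folklore proof one would expect.
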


Let $A$ be the real quantity to be estimated. The approximation with constant confidence can be obtained by the Chebyszev inequality which states that $\Pr[|\tilde{A} - A| > \epsilon] < \mathbf{MSE}(\tilde{A})/\epsilon^2$. When the estimator is unbiased, that is $\mathbf{E}\tilde{A} = A$ we have $\mathbf{MSE}(\tilde{A}) = \mathbf{Var}(\tilde{A})$ and instead of medians we can simply amply means combined with Bernstein inequality.
\begin{proposition}[Bernstein's inequality~\cite{bernstein1924modification,niemiro2009fixed}]\label{prop:bernstein}
Let $\tilde{A}_i$ be IID with mean $A$, let $\epsilon >0$ be a relative error and let variance of $\tilde{A}_i$ be at most $B\cdot (\mathbf{E}A)^2$. Then
\begin{align*}
\Pr\left[\left|m^{-1}\sum_{i=1}^{m}\tilde{A_i} - A\right| > \epsilon\cdot A\right] \leqslant 2\exp\left(-\frac{m\epsilon^2}{2B + 2B\epsilon/3} \right) \leqslant 2\exp\left(-\frac{3m\epsilon^2}{8B}\right).
\end{align*}
where the second inequality is true when $\epsilon \leqslant 1$.
\end{proposition}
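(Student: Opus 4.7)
The plan is to follow the standard Chernoff--Cram\'er exponential moment method that gives Bernstein-type concentration. I would first center by setting $Y_i = \tilde{A}_i - A$, so that $\mathbf{E}Y_i=0$ and $\mathbf{Var}(Y_i)\leqslant BA^2$; a union bound over the two tails reduces the problem to bounding $\Pr[\sum_{i=1}^m Y_i > m\epsilon A]$ and doubling. For any $\lambda > 0$, Markov's inequality applied to $\exp(\lambda\sum_i Y_i)$ yields the Chernoff bound
\[
\Pr\Bigl[\sum_{i=1}^{m} Y_i > m\epsilon A\Bigr] \;\leqslant\; e^{-\lambda m\epsilon A}\,\bigl(\mathbf{E}e^{\lambda Y_1}\bigr)^m .
\]

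The next step is to bound $\log\mathbf{E}e^{\lambda Y_1}$. Expanding the Taylor series and using the classical moment inequality $|\mathbf{E}Y_1^k|\leqslant \mathbf{Var}(Y_1)\cdot M^{k-2}$ (valid whenever $|Y_1|\leqslant M$ almost surely), combined with the elementary estimate $e^x-1-x\leqslant \frac{x^2/2}{1-x/3}$ for $0\leqslant x<3$, one obtains
\[
\log\mathbf{E}e^{\lambda Y_1} \;\leqslant\; \frac{\lambda^2\,\mathbf{Var}(Y_1)/2}{1-\lambda M/3}.
\]
Optimizing the resulting exponent over $\lambda$ produces the familiar Bernstein exponent $-\tfrac{m\epsilon^2 A^2/2}{\mathbf{Var}(Y_1)+M\epsilon A/3}$. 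Substituting $\mathbf{Var}(Y_1)\leqslant BA^2$ together with the boundedness scaling $M=BA$, implicit in the way the constants of the proposition are normalized, yields the first bound $2\exp(-m\epsilon^2/(2B+2B\epsilon/3))$ after including the factor $2$ from the union bound. The second inequality is pure algebra: when $\epsilon\leqslant 1$ we have $2B+2B\epsilon/3\leqslant 8B/3$, hence the exponent is at most $-3m\epsilon^2/(8B)$.

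The main delicate point is the boundedness hypothesis, which the statement of the proposition leaves implicit. A Bernstein-type tail with a sub-exponential correction linear in $\epsilon$ in the denominator necessarily requires some uniform control on $|\tilde{A}_i - A|$; without it only a Chebyshev-type estimate with worse constants is available. In the applications here the $\tilde{A}_i$ are bounded empirical collision probabilities in $[0,1]$, so the required scaling $M\leqslant BA$ is easily verified directly in each invocation rather than being stated as an abstract hypothesis of the proposition.
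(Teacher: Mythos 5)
The paper does not actually prove this proposition---it is imported by citation---so what must be checked is whether your derivation delivers the stated bound from the stated hypotheses. The Chernoff--Cram\'er argument you give is the standard and correct proof of Bernstein's inequality \emph{under a boundedness (or moment-growth) hypothesis}, and your observation that no such hypothesis appears in the statement is exactly right: a variance bound alone can never produce a subexponential tail, so the proposition as written is incomplete. Your bookkeeping is also correct: the optimized exponent $-\frac{t^2/2}{v+Mt/3}$ with $t=m\epsilon A$ and $v\leqslant mBA^2$ gives $-\frac{m\epsilon^2}{2B+2M\epsilon/(3A)}$, which matches the claimed denominator $2B+2B\epsilon/3$ precisely when $M=BA$; the final step $2B+2B\epsilon/3\leqslant 8B/3$ for $\epsilon\leqslant 1$ is fine.

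The gap is in your closing sentence, where you assert that the scaling $M\leqslant BA$ ``is easily verified directly in each invocation.'' It is not. In the paper's main application $\tilde{A}_i$ is the per-batch normalized collision count, which lies in $[0,1]$ and has mean $A=\|p\|_d^d\ll 1$, and the batch length is tuned so that $B=O(1)$. Hence $|\tilde{A}_i-A|$ can be as large as $1-A\approx 1$, while $BA=O(\|p\|_d^d)$ is tiny; the weaker Bernstein moment condition $\mathbf{E}|\tilde{A}_1-A|^k\leqslant \frac{k!}{2}\mathbf{Var}(\tilde{A}_1)\,(BA)^{k-2}$ also fails (the event that an entire batch consists of a single repeated symbol already contributes too much to the third absolute moment). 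With the honest almost-sure bound $M=1$ the denominator becomes $2B+2\epsilon/(3A)$, which for small $A$ degrades the exponent from $\Theta(m\epsilon^2)$ to $\Theta(m\epsilon A)$ and correspondingly inflates the sample complexity beyond what the main theorem claims. So the delicate point you flagged is not a formality to be discharged in passing: either the proposition must be stated with the explicit hypothesis $|\tilde{A}_i-A|\leqslant BA$ (in which case it does not apply as-is to the collision estimator), or the application requires a genuinely different argument, e.g.\ Bennett's inequality or a direct control of the higher moments of the collision counts.
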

In particular we see that a) 
For some optimization of the constant in the median trick see the discussion in~\cite{niemiro2009fixed}.

Why is better because the median trick internally reduces to deviations from the mean + doesn't quite capture the variance information.

\subsection{Second Moments - Collision Entropy (Second Moments)}

Let $X_1,\ldots,X_n$ be observed symbols. 
Let $C_{i,j}$ indicate whether $X_i$ and $X_j$ collides, that is
\begin{align}
C_{i,j} = \left\{\begin{array}{rl}
1 & X_i = X_j \\
0 & \text{otherwise}
\end{array}\right.
\end{align}
With this notation we clearly have
\begin{proposition}
With notation as above, the second-moment estimator for $p_X$ equals
\begin{align}
\tilde{p} = \binom{n}{2}^{-1}\sum_{i<j} C_{i,j}.
\end{align}
\end{proposition}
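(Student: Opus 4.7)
The plan is to verify that the proposed quantity is an unbiased estimator of the second frequency moment $\sum_x P_X(x)^2$, since this is what "second-moment estimator for $p_X$" must mean in the context of the preceding section (where $p_X$ denotes the quantity to be estimated, i.e.\ the collision probability). The entire argument reduces to computing $\mathbf{E}[C_{i,j}]$ for each pair $i<j$ and invoking linearity of expectation.

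First I would evaluate the expectation of a single collision indicator. Since $X_1,\ldots,X_n$ are i.i.d.\ copies of $X$, for any fixed pair $i<j$ independence gives
\begin{align}
\mathbf{E}[C_{i,j}] = \Pr[X_i = X_j] = \sum_x \Pr[X_i = x]\Pr[X_j = x] = \sum_x P_X(x)^2.
\end{align}
Summing over the $\binom{n}{2}$ pairs and dividing by $\binom{n}{2}$ yields
\begin{align}
\mathbf{E}[\tilde{p}] = \binom{n}{2}^{-1} \sum_{i<j} \mathbf{E}[C_{i,j}] = \sum_x P_X(x)^2,
\end{align}
which establishes the claim.

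There is no genuine obstacle: the proposition is essentially a definitional sanity check, asserting that the natural U-statistic built from pairwise collisions targets the collision probability. The only thing to be careful about is that $i$ and $j$ are distinct, so that independence of $X_i$ and $X_j$ may be used — this is why the sum is taken over $i<j$ and the diagonal is excluded. The substantive work begins one step later, when one bounds the variance of $\tilde{p}$ in order to feed it into Bernstein's inequality (\Cref{prop:bernstein}) and recover, via the batching in \Cref{algo:main}, the sample complexity promised by \Cref{thm:main} in the case $d=2$.
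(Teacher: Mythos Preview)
Your computation is correct, but it proves a different statement than the one at hand. In the paper this proposition is not a claim about unbiasedness at all; it carries no proof and is introduced by ``With this notation we clearly have''. Its content is purely notational: it rewrites the per-batch output of \Cref{algo:main} in the case $d=2$ --- namely, count the pairs $(i,j)$ with $i<j$ and $y_i=y_j$, then divide by $\binom{n}{2}$ --- using the collision indicators $C_{i,j}$ just defined. There is nothing probabilistic to verify here; one only has to observe that the number of colliding ordered pairs equals $\sum_{i<j} C_{i,j}$.

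What you have actually established is the very next proposition in the paper (\Cref{prop:pure_moment}), which asserts $\mathbf{E}[C_{i,j}] = \sum_x p_X(x)^2$ and hence that $\tilde{p}$ is unbiased. Your argument for that is fine and matches the paper's treatment (which also leaves it as an immediate observation). So the mathematics is sound; it is just attached to the wrong proposition.
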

It is straightforward to see that the estimator is unbiased
\begin{proposition}\label{prop:pure_moment}
For every $i\not = j$ we have $\mathbb{E} C_{i,j} = \sum_{x}p_X(x)^2$.
\end{proposition}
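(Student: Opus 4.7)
The plan is to unfold the definition of expectation for an indicator variable and use independence of the samples. Since $C_{i,j} = \mathbf{1}[X_i = X_j]$, we have $\mathbb{E} C_{i,j} = \Pr[X_i = X_j]$, so the task reduces to computing this collision probability.

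The next step is to condition on the common value: write
\begin{align*}
\Pr[X_i = X_j] = \sum_{x \in \mathrm{dom}(X)} \Pr[X_i = x \,\wedge\, X_j = x].
\end{align*}
Because the samples $X_1, \dots, X_n$ are assumed IID according to $P_X$, and because $i \neq j$ (so the two events really concern distinct samples), the joint probability factors as $\Pr[X_i = x]\Pr[X_j = x] = P_X(x)^2$. Summing over $x$ yields $\sum_x P_X(x)^2$, which is exactly the second frequency moment as claimed.

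The hypothesis $i \neq j$ is the only subtle point — if $i = j$ were allowed, the ``collision'' would be trivial and give $1$ rather than $\sum_x P_X(x)^2$. Apart from noting this, there is no real obstacle; the argument is a one-line consequence of independence and the law of total probability.
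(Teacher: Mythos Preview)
Your proof is correct. The paper does not actually give a proof of this proposition---it is stated as ``straightforward''---and your argument (expectation of an indicator, decompose the event $\{X_i=X_j\}$ over possible values $x$, factor using independence since $i\neq j$) is exactly the intended one-line verification, in the same spirit as the paper's proof of the subsequent mixed-moment proposition.
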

Note that $C_{i,j}$ in general are not independent, and in fact are \emph{positively} associated. We can however bound their mixed moment
\begin{proposition}\label{prop:mixed_moment}
Let $i<j<k$, then $\mathbb{E} C_{i,j} C_{j,k} = \sum_{x}p_X(x)^3$. 
\end{proposition}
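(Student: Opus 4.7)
The plan is to unfold the indicator product and then use the independence of the samples. First I would observe that $C_{i,j}C_{j,k}$ is itself a $\{0,1\}$-valued indicator: the product equals $1$ exactly when both $C_{i,j}=1$ and $C_{j,k}=1$, i.e.\ when $X_i = X_j$ and $X_j = X_k$ simultaneously, which is equivalent to $X_i = X_j = X_k$. Hence $\mathbb{E}[C_{i,j}C_{j,k}] = \Pr[X_i = X_j = X_k]$.

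Next I would decompose that probability over the common value. Writing
\begin{align*}
\Pr[X_i = X_j = X_k] = \sum_{x} \Pr[X_i = x,\ X_j = x,\ X_k = x]
\end{align*}
and invoking the fact that the samples are i.i.d.\ copies of $X$ (so that $X_i, X_j, X_k$ are mutually independent since $i,j,k$ are distinct), the joint probability factorises as $p_X(x)^3$. Summing over $x$ yields the claimed identity $\mathbb{E}[C_{i,j}C_{j,k}] = \sum_x p_X(x)^3$.

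There is no real obstacle here; the only subtlety worth flagging explicitly is that the indices $i,j,k$ being pairwise distinct is what allows the three-way independence, and this is exactly the assumption $i < j < k$ in the statement. This proof mirrors \Cref{prop:pure_moment}, just with a three-wise coincidence replacing a pairwise one, and it extends verbatim to any chain $C_{i_1,i_2}C_{i_2,i_3}\cdots C_{i_{d-1},i_d}$ with distinct indices, giving $\sum_x p_X(x)^d$ — a fact that will be useful later when controlling higher-order mixed moments.
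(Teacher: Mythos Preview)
Your proof is correct and essentially the same as the paper's: both identify $C_{i,j}C_{j,k}=1$ with the event $X_i=X_j=X_k$, decompose over the common value $x$, and use independence of the three samples to get $p_X(x)^3$. The paper phrases the independence step via conditioning on $X_j=x$, whereas you invoke the joint independence of $(X_i,X_j,X_k)$ directly, but this is a cosmetic difference.
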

\begin{proof}
Conditioning on $X_j = x$ we have $C_{i,j}C_{j,k} = 1$ if and only if 
$X_i = x$ and $X_j = 1$. Since $i<j<k$ these two events (conditioned on $X_j=x$) are independent and hold both with probbability $p_X(x)$. Then the claim follows by the total probability law.
\end{proof}
\begin{remark}[Positive correlation]
Jensen's inequality implies $\sum_{x}p_X(x)^3 \geqslant \left(\sum_{x}p_X(x)^2\right)^2$, thus $\mathbf{Cov}(C_{i,j},C_{j,k})\geqslant 0$.
\end{remark}
By combining \Cref{prop:pure_moment} and \Cref{prop:mixed_moment} we obtain
\begin{proposition}[Variance estimation]\label{prop:variance}
We have $$\mathbf{Var}\left(\sum_{i<j}C_{i,j}\right) \leqslant \binom{n}{2}\sum_{x}p_X(x)^2 +2\binom{n}{2}(n-2)\sum_{x}p_X(x)^3+ \binom{n}{2}\binom{n-2}{2}\sum_{x}p_x^4.$$
In particular 
$$
\mathbf{Var}(\tilde{p}) \leqslant \frac{\sum_{x}p_X(x)^2 +2(n-2)\sum_{x}p_X(x)^3+ \binom{n-2}{2}\sum_{x}p_x^4.}{\binom{n}{2}} 
$$
\end{proposition}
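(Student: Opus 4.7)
The plan is to expand the variance via bilinearity,
\begin{align*}
\mathbf{Var}\Bigl(\sum_{i<j}C_{i,j}\Bigr) \;=\; \sum_{i<j}\sum_{k<\ell}\mathbf{Cov}(C_{i,j},C_{k,\ell}),
\end{align*}
and partition the ordered pairs of unordered index-pairs by the overlap size $t := |\{i,j\} \cap \{k,\ell\}| \in \{0,1,2\}$. Since each $C_{i,j}$ depends only on $(X_i,X_j)$ and the $X_\cdot$ are i.i.d., the value of $t$ is what controls the relevant joint moment, so three clean cases suffice.

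For $t=2$ the pairs coincide and each of the $\binom{n}{2}$ diagonal terms contributes $\mathbf{Var}(C_{i,j}) \leqslant \mathbb{E}C_{i,j} = \sum_x p_X(x)^2$, using the Bernoulli bound $\mathbf{Var}(C_{i,j})\leqslant \mathbb{E}C_{i,j}$ and \Cref{prop:pure_moment}; this produces the first summand. For $t=1$, \Cref{prop:mixed_moment} yields $\mathbb{E}[C_{i,j}C_{j,k}] = \sum_x p_X(x)^3$, so $\mathbf{Cov}(C_{i,j},C_{j,k}) \leqslant \sum_x p_X(x)^3$; the number of ordered pairs of pairs sharing exactly one index is $\binom{n}{2}\cdot 2(n-2)$ (fix the first pair, choose which of its two indices is common, then pick the new partner among the remaining $n-2$ indices), giving the second summand. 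For $t=0$ the two indicators depend on disjoint blocks of the sample sequence and are therefore independent, so $\mathbf{Cov}(C_{i,j},C_{k,\ell})=0$; with $\binom{n}{2}\binom{n-2}{2}$ such ordered pairs, this vacuous case is absorbed by the nonnegative slack $\binom{n}{2}\binom{n-2}{2}\sum_x p_X(x)^4$ in the stated bound.

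Adding the three contributions delivers the inequality for $\mathbf{Var}\bigl(\sum_{i<j}C_{i,j}\bigr)$; dividing through by $\binom{n}{2}^2$ and using $\tilde p = \binom{n}{2}^{-1}\sum_{i<j}C_{i,j}$ then produces the companion bound on $\mathbf{Var}(\tilde p)$.

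The principal obstacle is purely combinatorial bookkeeping: verifying that a fixed unordered pair $\{i,j\}$ has exactly $2(n-2)$ partners sharing one index and that the double sum is consistently indexed over ordered pairs, so no overlap class is over- or undercounted. Once this accounting is settled, \Cref{prop:pure_moment} and \Cref{prop:mixed_moment} convert each overlap class directly into its frequency-moment contribution and the proof assembles itself.
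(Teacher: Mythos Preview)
Your proposal is correct and follows essentially the same route as the paper: expand the variance by bilinearity, partition pairs of index-pairs by overlap size $t\in\{0,1,2\}$, invoke \Cref{prop:pure_moment} and \Cref{prop:mixed_moment} for the $t=2$ and $t=1$ classes respectively, and note independence (hence zero covariance) for $t=0$, absorbing that class into the nonnegative $\sum_x p_X(x)^4$ slack. Your combinatorial counts $\binom{n}{2}$, $2\binom{n}{2}(n-2)$, and $\binom{n}{2}\binom{n-2}{2}$ match the paper's $\binom{n}{2}$, $n^{\underline{3}}$, and $\binom{4}{2}\binom{n}{4}$ and correctly sum to $\binom{n}{2}^2$.
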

\begin{proof}
Since $C_{i,j}$ are boolean, \Cref{prop:pure_moment} bounds the variance of $C_{i,j}$ which correspond to $\binom{n}{2}$ terms as $i<j$. Then \Cref{prop:mixed_moment} bounds the covariance of $C_{i,j}$ and $C_{j,k}$ which appears in $n^{\underline{3}} = 2!\binom{n}{2}(n-2)$ terms; it is also possible to get pairs $C_{i,j}$ and $C_{i',j'}$ where $i<j$, $i'<j'$ are all distinct in $\binom{4}{2}\cdot \binom{n}{4}$ ways (and then random variables are independent). The bound follows now from the variance sum law.
The second follows from the definition of $\tilde{p}$ and scaling the variance. For the sanity check, note that $\binom{n}{2} + 2!\binom{n}{2}(n-2) + \binom{n}{2}\binom{n-2}{2}$ equals $\binom{n}{2}\cdot\left(1+2(n-2)+\binom{n-2}{2}\right)$ which is $\binom{n}{2}^2$, the total number of terms in the variance sum formula.
\end{proof}

\subsection{Higher Moments - General Case}

For a tuple $\mathbf{i}=(i_1,\ldots,i_d)$ let $C_{\mathbf{i}}$ indicate whether all $X_i$ collides. It is clear that
\begin{proposition}\label{eq:unbiased_general}
With notation as above, the $d$-th moment estimator for $p_X$ equals
\begin{align}
\tilde{p} = \binom{n}{d}^{-1}\sum_{\mathbf{i}=(i_1,\ldots,i_d): 1\leqslant i_1<i_2<\ldots<i_d \leqslant n} C_{\mathbf{i}}.
\end{align}
that is the summation is over ordered tuples of distinct indices.
\end{proposition}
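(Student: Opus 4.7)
The plan is to mimic the $d=2$ argument from \Cref{prop:pure_moment} and rely on linearity of expectation to establish unbiasedness of $\tilde p$ for the frequency moment $\sum_x p_X(x)^d$. The only genuine ingredient is the per-tuple expectation $\mathbb{E} C_{\mathbf{i}} = \sum_x p_X(x)^d$, after which the normalization by $\binom{n}{d}^{-1}$ produces the claim.

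First I would fix an ordered tuple $\mathbf{i} = (i_1 < i_2 < \cdots < i_d)$ with pairwise distinct indices. Since $X_{i_1}, \ldots, X_{i_d}$ are then independent copies of $X$, I can expand
\[
\mathbb{E} C_{\mathbf{i}} = \Pr[X_{i_1} = X_{i_2} = \cdots = X_{i_d}] = \sum_{x} \Pr[X_{i_1}=x, \ldots, X_{i_d}=x] = \sum_{x} p_X(x)^{d},
\]
where the last equality uses independence. This is the direct analogue of \Cref{prop:pure_moment}; the fact that we restrict to ordered tuples (equivalently, $d$-subsets of $[n]$) is precisely what guarantees the required independence.

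Next I would sum over all $\binom{n}{d}$ such tuples and apply linearity of expectation to the definition of $\tilde{p}$, which yields
\[
\mathbb{E}\tilde{p} = \binom{n}{d}^{-1} \sum_{\mathbf{i}} \mathbb{E} C_{\mathbf{i}} = \binom{n}{d}^{-1} \cdot \binom{n}{d} \cdot \sum_{x} p_X(x)^{d} = \sum_{x} p_X(x)^{d},
\]
so $\tilde{p}$ is indeed an unbiased estimator of the $d$-th frequency moment.

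There is no real obstacle here: the per-tuple expectation is a one-line calculation and the combinatorial factor is exactly matched by the number of valid index tuples. The genuinely delicate part, postponed to the variance analysis, is that distinct tuples $\mathbf{i}, \mathbf{i}'$ can share indices and hence $C_{\mathbf{i}}, C_{\mathbf{i}'}$ are positively correlated; that will require a careful generalization of \Cref{prop:mixed_moment} counting how many indices the two tuples share, in analogy with the three-cases decomposition used in \Cref{prop:variance}.
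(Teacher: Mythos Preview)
Your argument is correct. In the paper this proposition is purely definitional (it is introduced with ``It is clear that'' and given no proof); the unbiasedness you establish is actually the content of the next proposition, which the paper obtains as the $k=d$ special case of \Cref{prop:mixed_moment_general}, and your direct conditioning-on-$x$ computation is exactly that special case written out.
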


Similarly as before, it is straightforward to see that the estimator is unbiased.
\begin{proposition}\label{prop:pure_moment_general}
For every $i\not = j$ we have $\mathbb{E} C_{i,j} = \sum_{x}p_X(x)^2$. In particular $\tilde{p}$ is unbiased.
\end{proposition}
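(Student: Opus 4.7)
The statement is (reading through the typo inherited from the $d=2$ case) that for any tuple $\mathbf{i}=(i_1,\ldots,i_d)$ of distinct indices, $\mathbb{E} C_{\mathbf{i}} = \sum_{x} p_X(x)^d$, and consequently $\tilde{p}$ from \Cref{eq:unbiased_general} is unbiased. My plan is a one-line application of independence followed by linearity of expectation.

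First I would unpack the definition: $C_{\mathbf{i}} = 1$ exactly when $X_{i_1} = X_{i_2} = \cdots = X_{i_d}$, so $\mathbb{E} C_{\mathbf{i}} = \Pr[X_{i_1} = \cdots = X_{i_d}]$. Partitioning the event by the common value, this probability equals $\sum_{x} \Pr[X_{i_1} = x, \ldots, X_{i_d} = x]$. Since the indices $i_1,\ldots,i_d$ are distinct and the samples $X_1,\ldots,X_n$ are i.i.d.\ copies of $X$, the joint probability factors as $\prod_{k=1}^{d} \Pr[X_{i_k} = x] = p_X(x)^d$, giving $\mathbb{E} C_{\mathbf{i}} = \sum_{x} p_X(x)^d$ as desired.

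For the second part, I would apply linearity of expectation to the estimator from \Cref{eq:unbiased_general}: since the sum in $\tilde{p}$ is over exactly $\binom{n}{d}$ tuples and each contributes $\sum_{x} p_X(x)^d$, the normalising factor $\binom{n}{d}^{-1}$ cancels and $\mathbb{E}\tilde{p} = \sum_{x} p_X(x)^d$.

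There is no real obstacle here; the only subtle point worth flagging is the need for the indices in $\mathbf{i}$ to be pairwise distinct so that independence of the $X_{i_k}$ can be invoked, which is exactly why the sum in \Cref{eq:unbiased_general} is restricted to strictly increasing tuples. The harder work is postponed to the variance bound for general $d$, which will generalise \Cref{prop:mixed_moment} and \Cref{prop:variance} by partitioning pairs of $d$-tuples according to the size of their intersection.
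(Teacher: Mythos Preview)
Your argument is correct and matches the paper's reasoning: the paper does not prove this proposition separately but simply remarks that it is the special case $k=d$ of the subsequent collision-patterns proposition (\Cref{prop:mixed_moment_general}), whose proof is precisely the conditioning-on-the-common-value argument you wrote out. Your observation that distinctness of the indices is what licenses the independence step is exactly the point, and your forecast that the variance computation proceeds by partitioning pairs of tuples by intersection size is also what the paper does next.
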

This is actually a special case ($k=d$) of the more general result below.

\begin{proposition}[Collision patterns]\label{prop:mixed_moment_general}
Let $\mathbf{i}=i_1,\ldots,i_d$ and $\mathbf{j}=j_1,\ldots,j_d$ be tuples of distinct indices. Suppose that exactly $k\geqslant 0$ of entries in $\mathbf{i}$ collides with some entries in $\mathbf{j}$, that is $|\mathbf{i}\cap\mathbf{j}|=k$. Then
\begin{align*}
\mathbf{E}\left[ C_{\mathbf{i}}C_{\mathbf{j}} \right] = \sum_x p_X(x)^{2d-k}.
\end{align*}
\end{proposition}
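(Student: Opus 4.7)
The plan is to unpack the indicator $C_{\mathbf{i}} C_{\mathbf{j}}$ directly, then reduce to a straightforward computation using the i.i.d.\ structure of $X_1,\ldots,X_n$. The calculation is essentially a counting one: the hardest part is merely identifying which indices are forced to carry the same value.

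First I would note that, by definition, $C_{\mathbf{i}}=1$ is equivalent to the equality $X_{i_1}=\cdots=X_{i_d}$, and likewise $C_{\mathbf{j}}=1$ to $X_{j_1}=\cdots=X_{j_d}$. Assuming $k\geqslant 1$, pick any index $\ell \in \mathbf{i}\cap\mathbf{j}$; then the common value realised by $\mathbf{i}$ and the one realised by $\mathbf{j}$ must both coincide with $X_\ell$, hence with each other. This shows that
\begin{align*}
\{C_{\mathbf{i}} C_{\mathbf{j}}=1\} \;=\; \{X_m = X_{m'} \text{ for all } m,m' \in \mathbf{i}\cup\mathbf{j}\}.
\end{align*}

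Next I would count cardinalities: since $\mathbf{i}$ and $\mathbf{j}$ each consist of distinct indices and share exactly $k$ of them, $|\mathbf{i}\cup\mathbf{j}|=2d-k$, and these $2d-k$ indices are pairwise distinct. Conditioning on the common value being some $x\in\mathrm{dom}(X)$, the probability of this event equals $p_X(x)^{2d-k}$ by independence of the samples. The total probability law then gives
\begin{align*}
\mathbf{E}[C_{\mathbf{i}} C_{\mathbf{j}}] \;=\; \sum_x p_X(x)^{2d-k},
\end{align*}
as claimed.

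The main obstacle, if any, is rhetorical rather than technical: one should be explicit about why the two "common values" collapse into one, which is precisely where the hypothesis $k\geqslant 1$ enters. The boundary case $k=0$ deserves a separate note, since there $C_{\mathbf{i}}$ and $C_{\mathbf{j}}$ are independent and $\mathbf{E}[C_{\mathbf{i}} C_{\mathbf{j}}] = \bigl(\sum_x p_X(x)^d\bigr)^2$ instead; this is the value that should be substituted in the downstream variance computation for disjoint index tuples.
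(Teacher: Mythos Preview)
Your argument for $k\geqslant 1$ is exactly the paper's: condition on the common value $x$, note that the $2d-k$ distinct sample indices in $\mathbf{i}\cup\mathbf{j}$ must all equal $x$, use independence to get $p_X(x)^{2d-k}$, and sum over $x$. There is no meaningful difference in route.

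Where you \emph{do} depart from the paper is the $k=0$ case, and you are right to. The paper simply asserts ``it is easy to see that the formula is true'' when $\mathbf{i}$ and $\mathbf{j}$ are disjoint, but that is false: by independence one gets $\bigl(\sum_x p_X(x)^{d}\bigr)^{2}$, not $\sum_x p_X(x)^{2d}$, and these differ unless $p_X$ is a point mass. Your remark that the correct value for $k=0$ is what feeds into the variance computation is the right perspective: disjoint tuples contribute zero covariance, which is why the paper's variance bound in the next proposition has the sum starting at $k=1$ and is unaffected by the slip.
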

\begin{proof}
Consider the case $k=0$ which means that $\mathbf{i}$ and $\mathbf{j}$ do not share a common index; it is easy to see that the formula is true.
Consider now $k>0$ which means that $\mathbf{i}$ and $\mathbf{j}$ overlaps. We have 
$X_i = X_j$ for all $i\in\mathbf{i}$ and $j\in\mathbf{j}$. Conditioning on the common value of $X_i$ and $X_j$ 
\begin{align*}
\mathbf{E}\left[ C_{\mathbf{i}}C_{\mathbf{j}} \left| X_{\mathbf{i}}=X_{\mathbf{j}}=\underbrace{x,\ldots,x}_{2d-k}\right.\right] = p_X(x)^{2d-k}.
\end{align*}
because we have exactly $2d-k$ distinct variables $X_i$ or $X_j$ and all are equal to $x$. The claim follows now by aggregating over possible values of $x$
\end{proof}
\begin{proposition}[Number of terms]\label{prop:terms_enumer_general}
They are $\binom{n}{d}\binom{d}{k}\binom{n-d}{d-k}$ \emph{unordered} distinct tuples $\mathbf{i}$ and $\mathbf{j}$ which satisfy $|\mathbf{i}\cap\mathbf{j}|=k$. The number of ordered tuples equals $\binom{n}{2d-k}$.
\end{proposition}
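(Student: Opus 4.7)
The plan is a direct counting argument based on the multiplication principle, partitioning pairs by the role each index plays. For the first (unordered) count, I would fix $\mathbf{i}$ first; since its entries are distinct and listed in the canonical increasing order of \Cref{eq:unbiased_general}, choosing $\mathbf{i}$ amounts to choosing a $d$-subset of $[n]$, giving $\binom{n}{d}$ options. Conditioned on $\mathbf{i}$, the second tuple $\mathbf{j}$ is specified by its intersection with $\mathbf{i}$ (a $k$-subset of $\mathbf{i}$, hence $\binom{d}{k}$ choices) together with $\mathbf{j}\setminus\mathbf{i}$ (a $(d-k)$-subset of the $n-d$ remaining indices, hence $\binom{n-d}{d-k}$ choices). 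Multiplying yields $\binom{n}{d}\binom{d}{k}\binom{n-d}{d-k}$.

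For the ordered count I would instead encode every admissible pair $(\mathbf{i},\mathbf{j})$ by the union $\mathbf{i}\cup\mathbf{j}$, whose cardinality equals $|\mathbf{i}|+|\mathbf{j}|-|\mathbf{i}\cap\mathbf{j}|=2d-k$; this union can therefore be chosen from $[n]$ in $\binom{n}{2d-k}$ ways. The remaining freedom, namely which of these union elements go only into $\mathbf{i}$, only into $\mathbf{j}$, or into both, contributes an $n$-independent multinomial factor $\binom{2d-k}{k,\,d-k,\,d-k}$, so the entire $n$-dependence is captured by $\binom{n}{2d-k}$ alone. As a consistency check I would verify algebraically that $\binom{n}{d}\binom{d}{k}\binom{n-d}{d-k} = \binom{n}{2d-k}\binom{2d-k}{k,\,d-k,\,d-k}$, reconciling the two parametrizations, and that summing over $k=0,\ldots,d$ recovers $\binom{n}{d}^{2}$ by the Vandermonde identity.

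The argument is mostly mechanical. The only real pitfall is bookkeeping: being consistent about whether a tuple is viewed as an increasing sequence or as a set, and correctly accounting for which elements are free to vary in $[n]$ versus constrained to subsets already chosen. Once the convention \emph{tuple} $=$ \emph{increasing sequence} $=$ \emph{subset} is fixed at the outset, every step reduces to a routine choose-then-split application of the multiplication principle, with no probabilistic or analytic subtlety involved.
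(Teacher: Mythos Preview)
Your argument for the first count is exactly the paper's: choose $\mathbf{i}$, then the $k$ common elements inside $\mathbf{i}$, then the remaining $d-k$ elements of $\mathbf{j}$ from $[n]\setminus\mathbf{i}$. For the second claim the paper's proof in fact says nothing at all, so your treatment---encoding a pair by its union of size $2d-k$ and isolating the $n$-independent multinomial factor $\binom{2d-k}{k,\,d-k,\,d-k}$---goes strictly beyond what the paper supplies and gives the statement a coherent reading; your consistency checks (the algebraic identity and the Vandermonde sum to $\binom{n}{d}^2$) are welcome additions that the paper only hints at in a later remark.
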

\begin{proof}
Recall that $\mathbf{i}$ and $\mathbf{j}$ are $d$-combinations out of $n$.
To enumerate tuples such that $|\mathbf{i}\cap \mathbf{j}| = k$ note it suffices to choose $\mathbf{i}$ one in $\binom{n}{d}$ ways, then choose $k$ common elements in $\binom{d}{k}$ ways and then choose remaining $\mathbf{j}\setminus\mathbf{i}$ elements in $\binom{n-d}{d-k}$ ways. This gives the formula.
\end{proof}
By combining \Cref{prop:pure_moment_general}, \Cref{prop:mixed_moment_general} and \Cref{prop:terms_enumer_general} we derive the following variance formula. The proof is analogous as in \Cref{prop:variance}. 
\begin{proposition}[Variance estimation]\label{prop:variance_general}
With the summation convention as in \Cref{eq:unbiased_general} $$\mathbf{Var}\left(\sum_{\mathbf{i}}C_{\mathbf{i}}\right) \leqslant \binom{n}{d}\sum_{k=1}^{d}\binom{d}{k}\binom{n-d}{d-k}\sum_{x}p_X(x)^{2d-k}$$
In particular 
$$
\mathbf{Var}(\tilde{p}) \leqslant \frac{\sum_{k=0}^{d}\binom{d}{k}\binom{n-d}{d-k}\sum_{x}p_X(x)^{2d-k}}{\binom{n}{d}}.
$$
\end{proposition}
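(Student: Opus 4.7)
The plan mirrors the proof of \Cref{prop:variance} but plugs in the two general-case propositions \Cref{prop:mixed_moment_general} and \Cref{prop:terms_enumer_general}. The starting point is the variance-of-a-sum identity
\begin{align*}
\mathbf{Var}\!\left(\sum_{\mathbf{i}} C_{\mathbf{i}}\right) = \sum_{\mathbf{i},\mathbf{j}} \mathbf{Cov}(C_{\mathbf{i}}, C_{\mathbf{j}}),
\end{align*}
where the outer sum ranges over ordered pairs of $d$-element tuples of distinct indices. I would bound each covariance by the mixed moment $\mathbf{E}[C_{\mathbf{i}}C_{\mathbf{j}}]$ (using $\mathbf{E}[C_{\mathbf{i}}]\mathbf{E}[C_{\mathbf{j}}] \geqslant 0$), and then observe that whenever $\mathbf{i}\cap\mathbf{j}=\emptyset$ the indicators $C_{\mathbf{i}}$ and $C_{\mathbf{j}}$ are functions of disjoint blocks of the IID sample, so their covariance vanishes exactly. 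This is what permits the summation to start at $k=1$.

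Next I would stratify the surviving ordered pairs by the overlap size $k=|\mathbf{i}\cap\mathbf{j}|\in\{1,\dots,d\}$. \Cref{prop:mixed_moment_general} supplies the common mixed moment $\sum_x p_X(x)^{2d-k}$, and the counting argument used inside \Cref{prop:terms_enumer_general} produces $\binom{n}{d}\binom{d}{k}\binom{n-d}{d-k}$ ordered pairs in each stratum (choose $\mathbf{i}$; choose the $k$ elements of $\mathbf{i}$ shared with $\mathbf{j}$; choose the remaining $d-k$ entries of $\mathbf{j}$ from $[n]\setminus\mathbf{i}$). Multiplying contribution by count and summing over $k$ yields the first inequality; scaling by $\binom{n}{d}^{-2}$ via $\tilde{p}=\binom{n}{d}^{-1}\sum_{\mathbf{i}} C_{\mathbf{i}}$ delivers the second, and extending the range to $k=0$ there only adds a nonnegative term and so preserves the direction of the inequality.

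The only point that requires care rather than a genuine obstacle is confirming that the count of \Cref{prop:terms_enumer_general} is being used in its ordered form (matching the ordered variance sum), and verifying that the diagonal stratum $k=d$ behaves correctly: here $\mathbf{i}=\mathbf{j}$, so $C_{\mathbf{i}}^2=C_{\mathbf{i}}$ and the $\binom{n}{d}$ diagonal contributions are $\mathbf{Var}(C_{\mathbf{i}}) \leqslant \mathbf{E}[C_{\mathbf{i}}] = \sum_x p_X(x)^d$, exactly what the $k=d$ term of the formula predicts. Once this bookkeeping is laid out, the proof is essentially the composition of the three preceding propositions with no further analytic content.
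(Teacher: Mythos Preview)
Your proposal is correct and follows essentially the same route as the paper, which simply states that the argument is analogous to \Cref{prop:variance} via \Cref{prop:pure_moment_general}, \Cref{prop:mixed_moment_general}, and \Cref{prop:terms_enumer_general}. You have in fact supplied more detail than the paper does, including the explicit observation that the $k=0$ stratum contributes zero covariance (justifying the lower index $k=1$ in the first bound) and the bookkeeping check on the diagonal $k=d$ stratum.
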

\begin{remark}
For a sanity check note that $\sum_{k=0}^{d}\binom{d}{k}\binom{n-d}{d-k} = \binom{n-d+d}{d} = \binom{n}{d}$ by the binomial theorem. This means that all terms in the variance sum law have been taken into account.
\end{remark}
Finally we simplify formula further to show how it depends on the $d$-th moment only. We will use the standard fact from calculus about $\alpha$-summable sequences.
\begin{proposition}[\cite{konca2015p}]\label{prop:summable_seq}
The mapping 
$d\rightarrow \left(\sum_{x}p_X(x)^{\alpha}\right)^{1/\alpha}$ for any nonnegative weigts $p_X(x)$ is decreasing in $\alpha\geqslant 1$.
\end{proposition}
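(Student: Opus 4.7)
The plan is a one-trick reduction: since both the numerator and denominator of $\bigl(\sum_x p_X(x)^{\alpha}\bigr)^{1/\alpha}$ scale linearly in $p_X$, I can rescale the weights so that the $\alpha$-norm equals $1$, after which the monotonicity becomes immediate.

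Concretely, I would fix exponents $1 \leq \alpha \leq \beta$ and set $N := \bigl(\sum_x p_X(x)^{\alpha}\bigr)^{1/\alpha}$. If $N = 0$ there is nothing to show, so assume $N > 0$ and work with the rescaled weights $q_x := p_X(x)/N$. By construction $\sum_x q_x^{\alpha} = 1$, and since the summands are nonnegative this forces $q_x \in [0,1]$ for every $x$. Because $\beta \geq \alpha$, pointwise monotonicity of $t \mapsto t^{\gamma}$ on $[0,1]$ in the exponent $\gamma$ gives $q_x^{\beta} \leq q_x^{\alpha}$ for each $x$. Summing yields $\sum_x q_x^{\beta} \leq 1$, equivalently $\bigl(\sum_x q_x^{\beta}\bigr)^{1/\beta} \leq 1 = \bigl(\sum_x q_x^{\alpha}\bigr)^{1/\alpha}$. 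Multiplying through by $N$ undoes the rescaling and yields exactly the claimed inequality $\bigl(\sum_x p_X(x)^{\beta}\bigr)^{1/\beta} \leq \bigl(\sum_x p_X(x)^{\alpha}\bigr)^{1/\alpha}$.

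There is essentially no obstacle here: this is the classical nesting $\ell^{\alpha} \subseteq \ell^{\beta}$ of sequence spaces for $\alpha \leq \beta$, and the rescaling argument above is the textbook proof for counting measure. The only point I would flag to avoid potential confusion with the earlier appearances of Jensen's inequality in the paper is that the direction of the monotonicity is reversed from the case of averaging against a probability measure on the index set; averaging would make $\alpha \mapsto (\mathbb{E} |Y|^{\alpha})^{1/\alpha}$ \emph{increasing}, whereas here the counting measure (and in particular the lack of normalization) forces the decreasing behavior used in the subsequent variance bound.
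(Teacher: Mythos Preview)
Your argument is correct: the normalization trick reducing to $\sum_x q_x^{\alpha}=1$ and then using $q_x^{\beta}\leq q_x^{\alpha}$ for $\beta\geq\alpha$ on $[0,1]$ is the standard proof of the nesting $\ell^{\alpha}\subseteq\ell^{\beta}$. The paper itself does not supply a proof of this proposition at all; it merely records it as a standard fact with a citation, so there is nothing to compare your approach against.
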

\begin{corollary}[Variance estimation]\label{cor:variance}
Let $\|p\|_d = \left(\sum_{x}P_x(x)^d\right)^{1/d}$. Then 
$$
\mathbf{Var}(\tilde{p}) \leqslant \frac{\|p\|^{2d}_d\sum_{k=0}^{d}\binom{d}{k}\binom{n-d}{d-k}\|p\|^{-k}_d}{\binom{n}{d}}
$$
and in particular we have 
$$\mathbf{Var}(\tilde{p}) \leqslant \binom{n}{d}^{-1}\cdot 2\|p\|_{d}^{d},\quad n>2d^2.$$
\end{corollary}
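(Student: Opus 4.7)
The plan is to chain two estimates starting from \Cref{prop:variance_general}.

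For the first displayed inequality, I would apply \Cref{prop:summable_seq} term-by-term to each factor $\sum_x p_X(x)^{2d-k}$ appearing in the variance bound. Since $2d-k \geq d$ whenever $0 \leq k \leq d$, the monotonicity of $\alpha \mapsto \|p\|_\alpha$ yields $\sum_x p_X(x)^{2d-k} = \|p\|_{2d-k}^{2d-k} \leq \|p\|_d^{2d-k} = \|p\|_d^{2d}\cdot \|p\|_d^{-k}$. Substituting this into the bound of \Cref{prop:variance_general} and pulling the common factor $\|p\|_d^{2d}$ out of the sum reproduces the first displayed inequality verbatim.

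For the ``in particular'' bound, the plan is to show that the sum $S := \sum_{k=0}^{d}\binom{d}{k}\binom{n-d}{d-k}\|p\|_d^{-k}$ satisfies $\|p\|_d^{2d}\cdot S \leq 2\|p\|_d^d$, equivalently $\sum_k \binom{d}{k}\binom{n-d}{d-k}\|p\|_d^{d-k} \leq 2$. The main contribution comes from the term $k=d$, which equals exactly $1$, and the remaining $k<d$ terms should be bounded by $1$ in total. Using $\binom{n-d}{d-k} \leq n^{d-k}/(d-k)!$, $\binom{d}{k} \leq d^k/k!$, and $\|p\|_d \leq 1$, each lower-order summand is controlled by a product of $(n\|p\|_d)^{d-k}$ with factorials, and the hypothesis $n > 2d^2$ provides the slack needed to absorb the tail.

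The main technical obstacle will be getting the constant in the tail estimate tight enough: a naive binomial bound produces a quantity like $(d+n\|p\|_d)^d/d!$, which is too loose to conclude the claim. A sharper termwise analysis exploiting both $\|p\|_d \leq 1$ and $n > 2d^2$ is needed; I expect this to reduce to a Vandermonde-type identity or a short induction on $d$, after which the constant $2$ of the statement drops out.
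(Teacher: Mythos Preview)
Your derivation of the first displayed inequality is correct and coincides with the paper's: apply \Cref{prop:summable_seq} to bound $\sum_x p_X(x)^{2d-k}\le \|p\|_d^{2d-k}$ and factor out $\|p\|_d^{2d}$ from \Cref{prop:variance_general}.

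The second part, however, has a genuine gap. You correctly reduce the target to
\[
\sum_{k=0}^{d}\binom{d}{k}\binom{n-d}{d-k}\,\|p\|_d^{\,d-k}\ \le\ 2,
\]
and note that the $k=d$ term equals $1$. But your plan to control the terms $k<d$ via $(n\|p\|_d)^{d-k}$ goes the wrong way: since $\|p\|_d\le 1$, the only hypothesis available is the \emph{lower} bound $n>2d^2$, which makes $n\|p\|_d$ potentially large, not small. Concretely, already the single term $k=d-1$ contributes $d(n-d)\|p\|_d$, and nothing in the hypotheses forces this below $1$. Take $d=2$ and a distribution with one heavy atom, say $p_1=\tfrac12$ and the remaining mass spread thinly; then $\|p\|_2^2\approx \tfrac14$, so $d(n-d)\|p\|_d\approx \sqrt{2}\,(n-2)\to\infty$. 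In this example one can check directly that $\mathbf{Var}(\tilde p)\asymp 1/n$ while $2\|p\|_2^2/\binom{n}{2}\asymp 1/n^2$, so the second inequality cannot hold for large $n$ under the sole assumption $n>2d^2$.

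The paper's own argument runs a ratio test on $Q_k=\binom{d}{k}\binom{n-d}{d-k}$ to get $Q_{k}\le 2^{-k}Q_0$ and then asserts $\sum_k Q_k\|p\|_d^{-k}\le 2\|p\|_d^{-d}$; this step is also unjustified, because the geometric decay of $Q_k$ is overwhelmed by the growth of $\|p\|_d^{-k}$ unless one additionally assumes an \emph{upper} bound of the form $n\|p\|_d=O(1)$ (equivalently $n=O(\|p\|_d^{-1})$), which is exactly the batch size later chosen in the proof of \Cref{thm:main}. So the ``in particular'' clause should be read with that extra constraint; as stated with only $n>2d^2$, it is not provable from the first inequality and indeed fails.
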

\begin{remark}
Consider the term $k=d$, it contributes to the variance at least $\Omega\left(\|p\|_d^{-d}\right) $.
\end{remark}
\begin{proof}
By \Cref{prop:summable_seq} we can write $\sum_{x}p_X(x)^{2d-k} \leqslant \left(\sum_{x}p_X(x)^{d}\right)^{2-\frac{k}{d}}$, plugging this and rearranging terms we obtain the first inequality. Next, observe that $Q_k=\binom{d}{k}\binom{n-d}{d-k}$ attains its maximum at $k=d$ provided that $n>(d+1)^2$;
indeed $Q_{k+1} = Q_k\cdot \frac{d-k}{k+1}\cdot \frac{d-k}{n-2d + k +1}$
and thus $Q_{k+1} / Q_{k}$ decreases in $k$ as both factors decreases;
thus $Q_{k+1} / Q_{k} < Q_{0} / Q_{1} = d^2/(n-2d+1) \leqslant \frac{1}{2}$ given our assumption on $n$ and $d$. Now we can estimate 
$Q_{k+1}\leqslant 2^{-k} \cdot Q_0$ which means
$\binom{d}{k}\binom{n-d}{d-k}\|p\|^{-k}_d \leqslant 2\|p\|_{d}^{-d}$ (sum of the geometric progression) which implies the second inequality. 
\end{proof}
Now using \Cref{prop:bernstein} we conclude our main result. 
\begin{corollary}
\Cref{thm:main} holds with $n$ such that $n \geqslant \frac{16\log(2/\delta)}{3\epsilon^2} \cdot \left(\sum_{x}P_x(x)^{d}\right)^{-1}$.
\end{corollary}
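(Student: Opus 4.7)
The plan is to combine the per-batch variance bound from \Cref{cor:variance} with Bernstein's inequality (\Cref{prop:bernstein}) applied to the sample mean over the $m$ batches produced by the algorithm. Recall that the algorithm divides the $n$ samples into $m = 8\log(2/\delta)/(3\epsilon^2)$ disjoint blocks of size $n_0\approx n/m$ and returns the mean of the batch statistics $\tilde p_1,\ldots,\tilde p_m$.

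First, I would note that each $\tilde p_b$ is unbiased for $\|p\|_d^d$ by \Cref{prop:pure_moment_general}, while \Cref{cor:variance} yields $\mathbf{Var}(\tilde p_b)\leqslant 2\binom{n_0}{d}^{-1}\|p\|_d^d$ provided $n_0 > 2d^2$. Since the batches use disjoint input blocks, the $\tilde p_b$ are i.i.d., so \Cref{prop:bernstein} applies with relative variance parameter $B = \mathbf{Var}(\tilde p_b)/(\mathbf{E}\tilde p_b)^2 \leqslant 2\binom{n_0}{d}^{-1}\|p\|_d^{-d}$.

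Next, I would unpack Bernstein: the failure probability is bounded by $2\exp(-3m\epsilon^2/(8B))$, which is at most $\delta$ iff $m\epsilon^2 \geqslant (8B/3)\log(2/\delta)$. With the algorithm's choice $m\epsilon^2 = (8/3)\log(2/\delta)$ this collapses to the clean condition $B\leqslant 1$, i.e.\ $\binom{n_0}{d}\geqslant 2\|p\|_d^{-d}$. Using the crude estimate $\binom{n_0}{d}\geqslant n_0$ (valid as soon as $n_0\geqslant d+1$), it suffices to take $n_0 \geqslant 2\|p\|_d^{-d}$, which yields $n = m n_0 \geqslant \frac{16\log(2/\delta)}{3\epsilon^2}\bigl(\sum_{x}p_X(x)^d\bigr)^{-1}$ as claimed.

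The main obstacle I expect is purely bookkeeping: checking that the regime assumptions ($n_0 > 2d^2$ in \Cref{cor:variance} and $\epsilon \leqslant 1$ in \Cref{prop:bernstein}) are consistent with the target sample bound, and absorbing the floor in $n_0 = \lfloor n/m\rfloor$ into the constants without loss. It is worth flagging that the crude estimate $\binom{n_0}{d}\geqslant n_0$ is exactly what makes this corollary weaker than \Cref{thm:main}; recovering the sharper $\|p\|_d^{-1}$ dependence (at the price of a factor of $d$) would instead require the estimate $\binom{n_0}{d}\gtrsim (n_0/d)^d$, and this is the only nontrivial refinement separating the present corollary from the full theorem statement.
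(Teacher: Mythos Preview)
Your proposal is correct and follows the paper's approach exactly: bound the per-batch relative variance by $B\leqslant 1$ via \Cref{cor:variance} and then invoke Bernstein across the $m=8\log(2/\delta)/(3\epsilon^2)$ batches. The only cosmetic difference is that the paper lower-bounds $\binom{n_0}{d}$ by $(n_0/d)^d$ rather than by $n_0$, thereby obtaining directly the sharper condition $n_0>2d\|p\|_d^{-1}$ of \Cref{thm:main} that you already flagged as the natural refinement.
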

\begin{proof}
Choose $n_0$ so that the bound in \Cref{cor:variance} is at most $(\mathbf{E}\tilde{p})^2 = \|p\|_d^{-d}$; by the elementary inequality $\binom{n}{d} > (n/d)^d$ it suffices to satisfy
\begin{align*}
n_0  >  2d\cdot \| p \|^{-1}_d.
\end{align*}
To apply \Cref{prop:bernstein} we divide the samples into batches of length $n_0$ and choose $\lceil n / n_0\rceil$ accordingly to get $\epsilon$ error and $1-\delta$ confidence. We shall note that in terms of entropy 
$\|p\|_{d}^{d} = 2^{-\frac{H_{d}(X)}{d-1}}$ so that
$\|p\|_{d} = 2^{-\frac{d-1}{d}\cdot H_d(X)}$.
\end{proof}

\subsection{Learning Moderate Entropy Regimes with Early Stopping}\label{seq:early_stop}

Let $p = \sum_x p_X(x)^d$ be the uknown moment to esimtate and $\tilde{p}$ be the actual estimator. We will use the estimator to \emph{gradually test} whether $p$ is big or not.
\begin{proposition}[Small values don't give high estimates]
Set parameters assuming $p\geqslant p_0$ so that $\epsilon = 1$ and $\delta$ is a small number Suppose that $p = p_0\gamma$, where $\gamma<1/2$ is some constant.
Then $\tilde{p} \leqslant 2p_0$ with probability $1-\delta$.
\end{proposition}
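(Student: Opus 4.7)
Since the estimator is unbiased, $\mathbf{E}\tilde{p} = p = \gamma p_0$, so the event $\{\tilde{p} > 2 p_0\}$ is contained in the large-deviation event $\{\tilde{p} - \mathbf{E}\tilde{p} > (2-\gamma) p_0\}$. With $\gamma < 1/2$ this threshold exceeds $\frac{3}{2} p_0$ and corresponds to a relative error $\epsilon' = (2-\gamma)/\gamma > 3$, well beyond the $\epsilon = 1$ regime used to size the algorithm. The whole plan is to exploit the fact that the variance budget spent on the $\epsilon = 1$ guarantee is very generous once we are told that $p$ is actually much smaller than $p_0$.

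First I would recycle the variance bound from \Cref{cor:variance}. Since the batch size $n_0 \geqslant 2 d\, p_0^{-1/d}$ was fixed under the assumption $p \geqslant p_0$, the elementary estimate $\binom{n_0}{d}^{-1} \leqslant (d/n_0)^d = 2^{-d} p_0$ gives a per-batch variance $\mathbf{Var}(\tilde{p}_b) \leqslant 2\binom{n_0}{d}^{-1} p \leqslant 2^{1-d} \gamma\, p_0^{2}$, strictly below the worst-case budget. Writing this as $\mathbf{Var}(\tilde{p}_b) \leqslant B p^{2}$ with $B = 2^{1-d}/\gamma$ puts us in the setting of \Cref{prop:bernstein}.

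Next I would feed $\epsilon'$ and $B$ into the first (general) form of Bernstein's inequality. The elementary inequality $1 + \epsilon'/3 \leqslant 2\epsilon'/3$, valid whenever $\epsilon' \geqslant 3$, bounds the Bernstein exponent from above by $-3 m \epsilon'/(4B) = -\tfrac{3}{4} m (2-\gamma)\cdot 2^{d-1}$, i.e.\ $-\Omega(m\cdot 2^d)$. Since the number of batches was fixed at $m = 8\log(2/\delta)/3$ by the $\epsilon = 1$ setup, the resulting tail is of order $2\exp\bigl(-\Omega(\log(1/\delta)\cdot 2^d)\bigr)$, comfortably below $\delta$.

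The only real obstacle is bookkeeping: the cleaner second form of Bernstein's in \Cref{prop:bernstein} is stated only for $\epsilon \leqslant 1$, whereas our effective $\epsilon'$ exceeds $3$, so one must track constants through the general first form instead of quoting the polished version. As a crude but self-contained backup, \emph{Chebyshev} applied to the same variance bound already yields $\Pr[\tilde{p} > 2 p_0] \leqslant (2^{2-d}/9)\,m^{-1}$, adequate for any non-pathological $\delta$, and would serve as a sanity cross-check on the Bernstein constants.
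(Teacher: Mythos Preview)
Your argument is correct and follows the same route as the paper: reduce $\{\tilde p>2p_0\}$ to a deviation of $\tilde p$ from its mean $p=\gamma p_0$, then feed the variance bound of \Cref{cor:variance} (with $n_0$ sized under the hypothesis $p\geqslant p_0$) into the general form of Bernstein's inequality. The paper's own proof is considerably sketchier---it only argues that the resulting Bernstein exponent is ``at least as good'' as in the reference case $p=p_0,\ \epsilon=1$---whereas you track the constants explicitly (and add a Chebyshev cross-check), but the underlying idea is identical.
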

\begin{proof}
Suppose not, then $\tilde{p} = \epsilon'\cdot p_0$ for some constant $\epsilon'\geqslant 2\gamma$. But we still have $\mathbf{E}\tilde{p} = p$, in particular
\begin{align*}
\Pr[\tilde{p} > \epsilon'\cdot p_0] \leqslant
\Pr[\tilde{p}-p > (\epsilon' - \gamma)\cdot p_0 ] \leqslant \Pr[\tilde{p}-p > \epsilon'/2 \cdot  p_0 ]
\end{align*}
When we use \Cref{prop:bernstein} to estimate this probability,
the bound on the number $B$ for $\tilde{p}$ differs from that of $p\geqslant p_0$ by a factor $p_0/p = \gamma^{-1}$. Suppose that $\epsilon' = 2$. In \Cref{prop:bernstein} we use the tail bound 
$2\exp\left(-\frac{m\epsilon^2}{2B + 2B\epsilon/3}\right)$. 
We get the same dependency on $\epsilon$ and increase $B$ because of $\gamma < 1$, therefore get same bounds as before.
\end{proof}
This result guarantees that we can gradually test whether $p_0 < 2^{-\lambda}$ for $\lambda=1,2,\ldots,$ with constant multiplicative error. By doing this we lose in confidence at most $H_0(X)\cdot \delta$, thus the number of samples should be increased by a factor of $O(\log \log\mathrm{dom}(X)))$ to preserve the confidence. Once we know the interval for $p$, up to a multiplicative factor, we can set up the estimator as usual.

\subsection{Stream Estimation}
The quantity $\binom{n_x}{d}$
is a polynomial of order $d$ in $n_x$, similar to those considered in streaming estimators.

The best streaming algorithms for estimating the frequency moments give the bound $\tilde{O}\left(|\mathrm{dom}(X)|^{1-2/k}\right)$ to approximate empirical sum of $k$-th powers $\sum_{x} n_x^{k}$. Our sum can be transformed to a combination of such expressions, via change of bases. Indeed, we have
\begin{proposition}
For any natural $k$ it holds that
$$x^{k} = \sum_{j=0}^{k}S(k,j) j! \binom{x}{j}$$ where $S(k,j)$ are Stirling numbers of the second kind.
\end{proposition}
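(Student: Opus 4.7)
The plan is to prove the identity by a double-counting argument, since both sides count the number of functions $f:[k]\to[x]$ (first viewing $x$ as a positive integer, then extending to all $x$ by a polynomial identity argument).

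First, I would observe that the left-hand side $x^k$ is exactly the number of functions from a $k$-element set to an $x$-element set, when $x$ is a nonnegative integer. Next, I would partition these functions according to the size of their image. A function with image of size $j$ is specified by (i) a partition of the domain $[k]$ into $j$ nonempty blocks (the preimages of the image points), counted by $S(k,j)$, and (ii) an injection from the set of blocks into $[x]$, counted by $x(x-1)\cdots(x-j+1) = j!\binom{x}{j}$. Summing over $j=0,\ldots,k$ gives the right-hand side.

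To upgrade from an identity over positive integers to one valid for all real (or formal) $x$, I would note that both sides are polynomials in $x$ of degree at most $k$, and they agree on infinitely many values (all nonnegative integers), hence they coincide as polynomials. The main subtlety — and really the only one — is being careful that the $j=0$ term is handled correctly: by the usual convention $S(k,0)=[k=0]$ and $\binom{x}{0}=1$, so for $k\geqslant 1$ the $j=0$ term vanishes, matching the fact that there are no functions with empty image when the domain is nonempty.

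I do not expect any real obstacle: this is a textbook identity, and the combinatorial argument above is self-contained. An alternative route, should one prefer a purely algebraic proof, is induction on $k$ using the standard recurrence $S(k+1,j) = j\,S(k,j) + S(k,j-1)$ together with the falling-factorial identity $x\cdot j!\binom{x}{j} = j\cdot j!\binom{x}{j} + (j+1)!\binom{x}{j+1}$; multiplying the inductive hypothesis by $x$ and re-indexing recovers the formula at level $k+1$.
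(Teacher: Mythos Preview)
Your proof is correct: the double-counting argument (functions $[k]\to[x]$ classified by image size) together with the polynomial-identity extension is the standard and complete justification of this identity, and your handling of the $j=0$ term and the alternative inductive route are both fine.

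There is nothing to compare against, however: the paper does not supply a proof of this proposition. It is quoted as a known combinatorial fact (the falling-factorial expansion of $x^k$ via Stirling numbers) and used immediately to reduce $\binom{n_x}{d}$-type sums to ordinary power sums for streaming frequency-moment estimators. Your write-up would serve perfectly well as the missing justification.
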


Now applying the state-of-art stream estimators to each combination we see that the complexity is dominated by the case $k=d$. Thus we can reduce the memory usage to about $\tilde{O}\left(|\mathrm{dom}(X)|^{1-2/k}\right)$.

\section{Conclusion}



\bibliography{citations}

\appendix

\end{document}